
\documentclass[preprint,12pt]{elsarticle}



\usepackage{graphics}

\usepackage{amssymb}
\usepackage{amsthm}
\usepackage{mathrsfs}
\usepackage{multicol}

\usepackage{graphicx}
\usepackage{rotating,booktabs}
\usepackage{dcolumn}
\usepackage{tabularx}




\biboptions{square,sort&compress}

\journal{ }

\begin{document}

\newtheorem{proposition}{Proposition}
\newtheorem{remark}{Remark}
\newcolumntype{z}[1]{D{.}{.}{#1}}

\begin{frontmatter}



 \title{Probabilities of Positive Returns and  Values  of Call Options }



\author[label1]{Guanghui Huang\corref{cor1}}
  \ead{hgh@cqu.edu.cn}
\author[label2]{Jianping Wan}
 \cortext[cor1]{Corresponding author.}
\address[label1]{Department of Statistics and Actuarial Science, Chongqing
University, Chongqing, 400030,  P.R. CHINA;}
\address[label2]{Department of Mathematics, Huazhong University of Science \& Technology, Wuhan, 430074,   P.R.CHINA.}

\begin{abstract}
The true probability of a European call option to achieve positive
return is investigated under the Black-Scholes model. It is found
that the probability is determined by those market factors appearing
in the BS formula, besides the growth rate of stock price. Our
numerical investigations indicate that the biases of BS formula is
correlated with the growth rate of stock price. An alternative
method to price European call option is proposed, which adopts an
equilibrium argument to determine option price through the
probability of positive return. It is found that the BS values are
on average larger than the values of proposed method for
out-of-the-money options, and smaller than the values of proposed
method for in-the-money options. A typical smile shape of implied
volatility is also observed in our numerical investigation. These
theoretical observations are similar to the empirical anomalies of
BS values, which indicates that the proposed valuation method may
have some merit.
\end{abstract}

\begin{keyword}
Black-Scholes formula \sep
 probability of positive return \sep
  growth
rate of stock price \sep
 equilibrium option pricing.


\end{keyword}

\end{frontmatter}



\section{Introduction}
The departure of market prices from their theoretical Black-Scholes
(BS) values has been discussed for a long time, and many market
factors have been used to explain those anomalies, such as
volatility, interest rate, moneyness, time to expiration,
transaction costs, market liquidation, trading volume, bid-ask
spread, option open interest, and short sale constraints, etc.
Examples of these studies include \cite{chiras_manaster_1978,
macbeth_merville_1979,rubinstein_1985,longstaff_1995,
thompson_williams_1999,kuwahara_marsh_2000,isaenko_2007},
 and the references therein.
Although the growth rate is an important factor to describe the
dynamics of stock prices, it is rarely discussed in the literature
for the biases of BS formula, except the work of
\cite{thompson_williams_1999}. The purposes of this paper are to
identify whether the growth rate  of stock price can be used to
explain the biases of BS formula, and to determine the price of
European call option through an equilibrium argument under the BS
model.

Risk neural valuation approach was first introduced by
\cite{black_scholes_1973}, and has been extended by
\cite{merton_1973,cox_ross_1976,harrison_kreps_1979,
harrison_pliska_1981}, and others. The risk neural price of an
option  is  the discounted expectation of its payoff under a risk
neural pricing measure, which satisfies the martingale constraint
that  the  discounted process of stock price is a martingale under
this measure. The existence of risk neural martingale measure is
equivalent to the absence of arbitrage opportunity.
\cite{longstaff_1995} investigated the existence of martingale
restriction using S\&P 100 option prices, and found that the data
strongly reject the martingale restriction. Those observed anomalies
of BS formula indicate that options should be valued by equilibrium
methods rather than no-arbitrage models \cite{longstaff_1995}.

It is the physical measure of the market that determines the
dynamics of stock price, rather than the artificial equivalent
martingale measure, therefore the true probability of an option to
bring a positive return to the holder is calculated under the
physical measure. The probability is determined by all of the market
factors appearing in the BS formula, besides the growth rate of
stock price.

Our numerical investigations show that when the growth rates are
negative, there is almost no option with probability of positive
return larger than $50\%$ with respect to their BS  prices. When the
growth rates are positive, the probabilities are on average larger
than $50\%$ for   in-the-money options, and less than $50\%$ for
out-of-the-money options. Therefore the equilibrium prices should be
adjusted from their BS values, such that the market prices are on
average larger than the BS values for in-the-money options, and less
than the BS values for out-of-the-money options.  These theoretical
predictions are similar to the empirical results reported by
\cite{macbeth_merville_1979}, where the market prices are on average
lager than the BS values for in-the-money options, and less than the
BS values for out-of-money options.

A equilibrium argument is applied to determine the option price
through the probabilities of positive returns. The market clearing
price is the price at which both parties in the transaction come to
an agreement on the probability of positive return. The individual
requirements for probabilities of positive returns can be different
among the investors according to their attitudes to risk, and the
option price can be calculated for each specified probability of
positive return.

The performance of the proposed method and the BS formula are
compared numerically under the BS model. It is found that the values
of proposed method  are on average larger than the BS values for
  in-the-money options, and less
than the BS values  for  out-of-the-money options. And a typical
smile shape of implied volatilities is also observed for different
growth rates of stock prices. These theoretical phenomenon are
similar to the empirical anomalies of
 BS values  reported by  \cite{macbeth_merville_1979,kuwahara_marsh_2000,isaenko_2007}, which
indicate that the proposed valuation method may have some merit.

The rest of this article is organized as follows. Section 2 derives
the probabilities of positive returns for European call options
under the BS model, and investigates the influence of  market
factors on the probabilities numerically. Section 3 determines the
option price using probability of positive return. Section 4
compares  BS formula with the proposed method from their values and
the implied volatilities respectively. Section 5 summarizes this
paper and discusses the results.

\section{Probabilities of Positive Returns}
\subsection{The Black-Scholes Model}
In the Black-Scholes framework, the market is frictionless, and the
dynamics of stock price is described by a geometric Brownian motion,
\begin{equation}\label{price}
\frac{\mathrm{d}S_t}{S_t} = \mu \mathrm{d}t +  \sigma \mathrm{d}W_t,
\end{equation}
where $S_t$ is the stock price at time $t$,
 $\mu$ is the expected growth rate of stock price, $\sigma$ is the
volatility, $W_t$ is a standard Brownian motion. The solution of
equation (\ref{price}) is
\begin{equation}
S_t = S_0 \exp\left( \sigma W_t + \left( \mu - \frac{1}{2} \sigma^2
\right) t \right ), \forall t \in [0,T],
\end{equation}
where $S_0$ is the stock price at time zero, $T$ is the time to
expiration of the option.

The Black-Scholes price for a European call option  is
\begin{equation}\label{bsp}
C=S_0 N(d_1)-K e^{-rT} N(d_2),
\end{equation}
where $K$ is the strike price of the option, $N(\cdot)$ is the
cumulated distribution function of standard  normal distribution,
$r$ is the compounded riskless interest rate, and
\begin{equation}
d_1   =   \frac{\log \left( \frac{S_0}{K} \right) + \left(
r+\frac{1}{2} \sigma^2 \right) T  }
                {\sigma \sqrt{T}} , \quad
d_2   =   \frac{\log \left( \frac{S_0}{K} \right) + \left(
r-\frac{1}{2} \sigma^2 \right) T }
                {\sigma \sqrt{T}}
    =  d_1 - \sigma \sqrt{T}.
\end{equation}

The risk neural price of option is determined by those market
factors, except the growth rate of stock price. Therefore most of
the empirical studies do not use growth rate of stock price to
explain the biases of BS formula.

\subsection{Probabilities of Positive Returns}
It is the physical measure of the stock market which determines the
dynamics of the stock price, therefore the true probability of
positive return should be calculated under the physical measure,
instead of the artificial martingale measure. Let $C$ denote  the
price of a European call option at time zero, and $Pr$ denote  the
physical measure, then the probability of positive return is
\begin{equation}
     p(C)=Pr \left\{  \left( S_T-K \right)^{+} - C e^{rT} \ge 0
     \right\},
\end{equation}
where $ \left( S_T-K \right)^{+} =\max \left\{0,  S_T-K \right\} $.
We have the following result.
\begin{proposition}\label{prop1}
In the Black-Scholes model, when a European call option matures, the
probability for a holder to achieve positive return  is
\begin{eqnarray}\label{pequation}
p(C)
  &= &
  N(e_1)\cdot N(e_2),
\end{eqnarray}
where
\begin{equation}
e_1 = \frac{\log \left(\frac{S_0}{K}\right)+ \left( \mu -\frac{1}{2}
\sigma^2
          \right)T}{\sigma \sqrt{T}}, \quad
e_2 = \frac{\log\left(\frac{S_0}{K+C e^{rT}}\right)+ \left(\mu
-\frac{1}{2} \sigma^2
          \right)T}{\sigma \sqrt{T}},
\end{equation}
$N(\cdot)$ is the cumulated distribution function of standard
normal distribution, $C$ is the price of the call option, $S_0$ is
the stock price at time zero, $K$ is the strike price, $\mu$ is the
growth rate of stock price, $\sigma$ is the volatility, $r$ is the
riskless interest rate,
 and $T$ is the time to expiration.
\end{proposition}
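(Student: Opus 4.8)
The plan is to start from the defining event $\{(S_T-K)^{+} - Ce^{rT} \ge 0\}$ and reduce it to explicit threshold conditions on the terminal price $S_T$, whose law under the physical measure $Pr$ is lognormal. Since $S_T = S_0\exp(\sigma W_T + (\mu - \frac{1}{2}\sigma^2)T)$ with $W_T \sim N(0,T)$, the normalized increment $Z := W_T/\sqrt{T}$ is standard normal, and every event of the form $\{S_T \ge x\}$ translates into a half-line $\{Z \ge c(x)\}$ whose probability is $N(-c(x))$ by the symmetry $1-N(u)=N(-u)$. This is the computational engine behind (\ref{pequation}).

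First I would dispose of the trivial region: whenever $S_T \le K$ the payoff $(S_T-K)^{+}$ vanishes, so $(S_T-K)^{+} - Ce^{rT} = -Ce^{rT} < 0$ and no positive return is possible. Hence the positive-return event forces the option to finish in the money, $S_T > K$, and on that region the requirement becomes $S_T - K \ge Ce^{rT}$, i.e. $S_T \ge K + Ce^{rT}$. Thus the single inequality splits naturally into an ``in-the-money'' condition $\{S_T > K\}$ and a ``premium-recovery'' condition $\{S_T \ge K + Ce^{rT}\}$.

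Next I would evaluate the two relevant threshold probabilities. Inverting the exponential in $S_T \ge K$ and applying the standard-normal CDF gives $Pr\{S_T \ge K\} = N(e_1)$, and the same manipulation applied to $S_T \ge K + Ce^{rT}$ gives $Pr\{S_T \ge K + Ce^{rT}\} = N(e_2)$. These are exactly the two factors appearing in the statement, and each reduces to a routine inversion once the growth term $(\mu - \frac{1}{2}\sigma^2)T$ and the scaling by $\sigma\sqrt{T}$ are carried through.

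The crux of the argument, and the step I expect to be the main obstacle, is the passage from these two conditions to the \emph{product} $N(e_1)\cdot N(e_2)$. Both conditions are driven by the single Brownian increment $W_T$ and describe nested half-lines in $Z$, so one must make explicit the sense in which the ``in-the-money'' and ``premium-recovery'' requirements contribute multiplicatively to $p(C)$ rather than merely reducing to the innermost threshold. I would therefore devote the bulk of the write-up to justifying this factorization, since it is precisely what separates (\ref{pequation}) from the single-CDF expression, and I would sanity-check the resulting formula in the limiting regimes (for instance $C \to 0$, and deep in- and out-of-the-money strikes) to confirm that it behaves consistently as a probability.
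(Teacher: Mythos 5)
Your reduction is correct as far as it goes, but the step you flag as the main obstacle --- justifying the factorization into $N(e_1)\cdot N(e_2)$ --- cannot be filled, because the factorization is false. As you yourself observe, for $C>0$ the two threshold conditions are nested: $\{S_T \ge K + Ce^{rT}\} \subseteq \{S_T \ge K\}$. Hence the positive-return event simply \emph{equals} the innermost event, and
\[
p(C) \;=\; Pr\left\{ S_T \ge K + Ce^{rT} \right\} \;=\; N(e_2).
\]
A product $Pr(A)\,Pr(B)$ computes $Pr(A\cap B)$ only under independence, whereas for nested events $Pr(A\cap B)=Pr(B)$; the fact that both conditions are ``driven by the single Brownian increment $W_T$'' is exactly why no multiplicative contribution can arise. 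Your own proposed sanity check settles the matter: as $C \to 0^{+}$ the true probability tends to $Pr\{S_T > K\} = N(e_1)$, while the claimed formula tends to $N(e_1)^2$, off by a factor of $N(e_1)$.

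The paper's own proof (Appendix A) obtains the product precisely by committing the error you were circling around. It splits $p(C)$ by the law of total probability over $\{S_T \ge K\}$ and $\{S_T < K\}$ --- the same split as your ``trivial region'' step, and the second term correctly vanishes --- but it then evaluates the surviving term, $Pr\{(S_T-K)^{+} - Ce^{rT} \ge 0 \mid S_T \ge K\}\cdot Pr\{S_T \ge K\}$, as the product of \emph{unconditional} probabilities $Pr\{S_T\ge K\}\cdot Pr\{S_T \ge K+Ce^{rT}\}$. That silently replaces the conditional probability $Pr\{S_T \ge K + Ce^{rT}\mid S_T \ge K\} = N(e_2)/N(e_1)$ by the unconditional $N(e_2)$, i.e.\ it treats nested events as independent. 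So your approach, carried to completion, does not prove Proposition \ref{prop1}; it refutes it, showing the true probability is $N(e_2)$. The honest conclusion for your write-up is to state this plainly rather than promise a justification of the factorization, and to note that results built on $p=N(e_1)N(e_2)$ downstream (e.g.\ the pricing formula of Proposition \ref{propprice}) inherit the same error.
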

\begin{proof}
See Appendix A.
\end{proof}

\begin{proposition}
The probability of positive return is a decreasing function of $C$,
$K$, $\sigma$ and $r$ respectively, and is an increasing function of
$\mu$ and $S_0$ respectively, when the other  factors are held
constant.
\end{proposition}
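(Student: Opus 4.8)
The plan is to treat $p$ as a smooth function of each parameter and differentiate, exploiting the factored form $p=N(e_1)\,N(e_2)$ supplied by Proposition~\ref{prop1}. Since the standard normal distribution function satisfies $N'>0$ and both factors $N(e_1),N(e_2)$ are strictly positive, the sign of $\partial p/\partial x$ for any parameter $x$ is governed entirely by the signs of $\partial e_1/\partial x$ and $\partial e_2/\partial x$ through the product rule
\[
\frac{\partial p}{\partial x}=N'(e_1)\frac{\partial e_1}{\partial x}\,N(e_2)+N(e_1)\,N'(e_2)\frac{\partial e_2}{\partial x}.
\]
Whenever $\partial e_1/\partial x$ and $\partial e_2/\partial x$ share one definite sign, the asserted monotonicity in $x$ follows immediately. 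The strategy therefore reduces to computing six pairs of elementary partial derivatives and reading off their signs.

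Five of the six variables are routine. For $C$ and $r$ only $e_2$ depends on the variable, through the term $K+Ce^{rT}$, and $\partial e_2/\partial C=-e^{rT}/[(K+Ce^{rT})\sigma\sqrt{T}]<0$ while $\partial e_2/\partial r=-CTe^{rT}/[(K+Ce^{rT})\sigma\sqrt{T}]<0$, giving $\partial p/\partial C<0$ and $\partial p/\partial r<0$. For $K$ the logarithm in each numerator decreases, so $\partial e_1/\partial K=-1/(K\sigma\sqrt{T})<0$ and $\partial e_2/\partial K=-1/[(K+Ce^{rT})\sigma\sqrt{T}]<0$, hence $\partial p/\partial K<0$. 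For $S_0$ and $\mu$ the two derivatives coincide and are positive, namely $\partial e_1/\partial S_0=\partial e_2/\partial S_0=1/(S_0\sigma\sqrt{T})>0$ and $\partial e_1/\partial\mu=\partial e_2/\partial\mu=\sqrt{T}/\sigma>0$, so $\partial p/\partial S_0>0$ and $\partial p/\partial\mu>0$. Each conclusion drops straight out of the displayed sign rule.

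The genuinely delicate case is $\sigma$, which I expect to be the main obstacle. Writing $e_i=A_i/(\sigma\sqrt{T})-\frac{1}{2}\sigma\sqrt{T}$ with $A_1=\log(S_0/K)+\mu T$ and $A_2=\log(S_0/(K+Ce^{rT}))+\mu T$, one finds
\[
\frac{\partial e_i}{\partial\sigma}=-\frac{A_i}{\sigma^2\sqrt{T}}-\frac{\sqrt{T}}{2}.
\]
Here $\sigma$ acts in two competing ways, through the $1/\sigma$ scaling of the numerator and through the $-\frac{1}{2}\sigma^2 T$ It\^o correction, so $\partial e_i/\partial\sigma$ is negative only when $A_i\ge-\frac{1}{2}\sigma^2 T$, i.e. when $\log(S_0/K)+(\mu+\frac{1}{2}\sigma^2)T\ge0$ and analogously for $A_2$. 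The clean termwise argument therefore does not close automatically: if $A_i$ is sufficiently negative the corresponding factor is actually increasing in $\sigma$. I would first establish the claim in the natural regime where both $A_1,A_2\ge-\frac{1}{2}\sigma^2 T$, so that each $\partial e_i/\partial\sigma<0$ and the sign rule applies verbatim, and then examine whether $\partial p/\partial\sigma$ can still be signed outside that regime by weighing the two factors against each other. This final balancing is where I anticipate the real work, and possibly the need for an additional hypothesis, to lie, since without a lower bound on $A_1$ the product $N(e_1)N(e_2)$ need not be monotone in $\sigma$.
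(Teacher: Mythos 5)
Your product-rule argument for $C$, $K$, $r$, $\mu$, and $S_0$ is correct and is in substance identical to the paper's own (one-line) proof, which just invokes ``the monotonicity of the distribution function'': for each of these five parameters both $e_1$ and $e_2$ are monotone in that parameter with a common sign, and $p=N(e_1)N(e_2)$ inherits the monotonicity. Your explicit partial derivatives simply make that remark precise.

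Your hesitation over $\sigma$ is not a defect of your write-up; it exposes a genuine flaw in the proposition itself, which the paper's proof glosses over by treating $\sigma$ exactly like the other parameters even though $e_1$ and $e_2$ are \emph{not} monotone in $\sigma$. With your notation $A_1=\log(S_0/K)+\mu T$ and $A_2=\log\left(S_0/(K+Ce^{rT})\right)+\mu T$, whenever $A_2<0$ --- for instance an at-the-money option ($K=S_0$) with $\mu\le 0$ and any $C>0$ --- one has $e_2=A_2/(\sigma\sqrt{T})-\tfrac{1}{2}\sigma\sqrt{T}\to-\infty$ as $\sigma\to 0^{+}$, hence $p\to 0$, while $p>0$ for every $\sigma>0$; so $p$ cannot be decreasing in $\sigma$ on $(0,\infty)$. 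Concretely, with $S_0=K=100$, $\mu=r=0$, $T=1$, $C=5$, one computes $p\approx 0.14$ at $\sigma=0.1$, $p\approx 0.17$ at $\sigma=0.3$, and $p\approx 0.15$ at $\sigma=0.5$: not monotone. The $\sigma$-claim is salvageable exactly under the extra hypothesis you anticipated, namely $A_2\ge 0$ (equivalently $S_0e^{\mu T}\ge K+Ce^{rT}$, which forces $A_1>0$ as well), for then $\partial e_i/\partial\sigma=-A_i/(\sigma^{2}\sqrt{T})-\sqrt{T}/2<0$ for both $i$ and your sign rule closes the argument. So the correct conclusion of your analysis is that the proposition as stated is false for $\sigma$ and needs a restriction to sufficiently in-the-money options; the paper's appeal to monotonicity of $N$ is valid only for the other five parameters.
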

\begin{proof}
The proof is directly from the monotonicity of distribution
function.
\end{proof}

The probabilities of positive returns for different options will be
observed by the investors from the historical data, and the holders
will try to buy the options with probabilities beyond their
individual requirements, and wouldn't buy the options with lower
probabilities. The option price will be adjusted according to their
probabilities of positive returns to clear the market. The
equilibrium price of an option is a balance of probabilities for
both parties in the transaction.

\subsection{Probabilities and Biases of BS Formula}
As the probability of positive return is determined by the market
factors, we try to explain the observed biases of BS formula from
the compositions of market factors. The probabilities of positive
returns are calculated with respect to the BS values, and the
compositions of market factors whose probabilities of positive
returns are beyond $50\%$ are plotted in Figure \ref{urstk}.

\begin{figure}
\centering{
  \includegraphics[width=9cm]{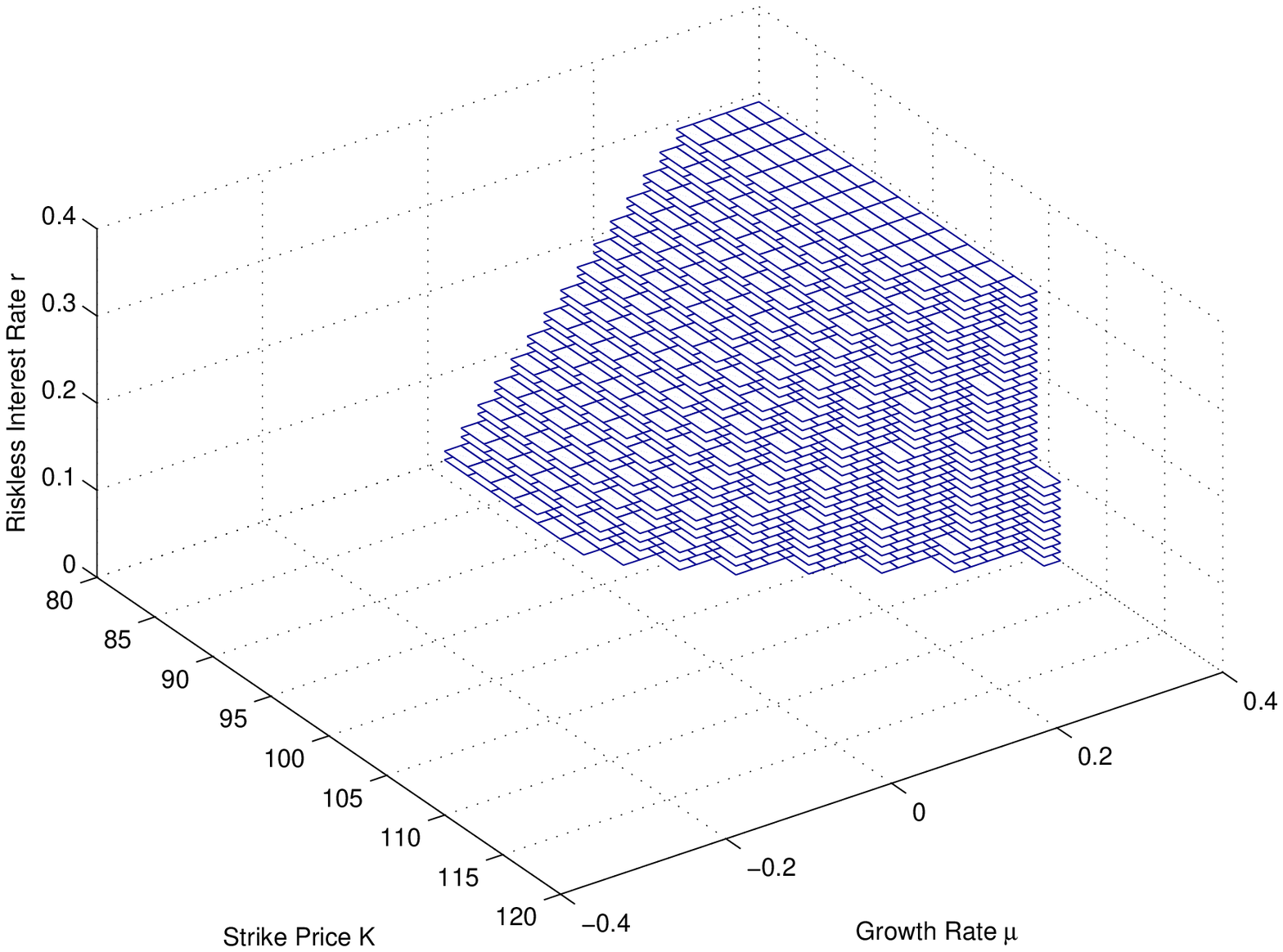}
\includegraphics[width=9cm]{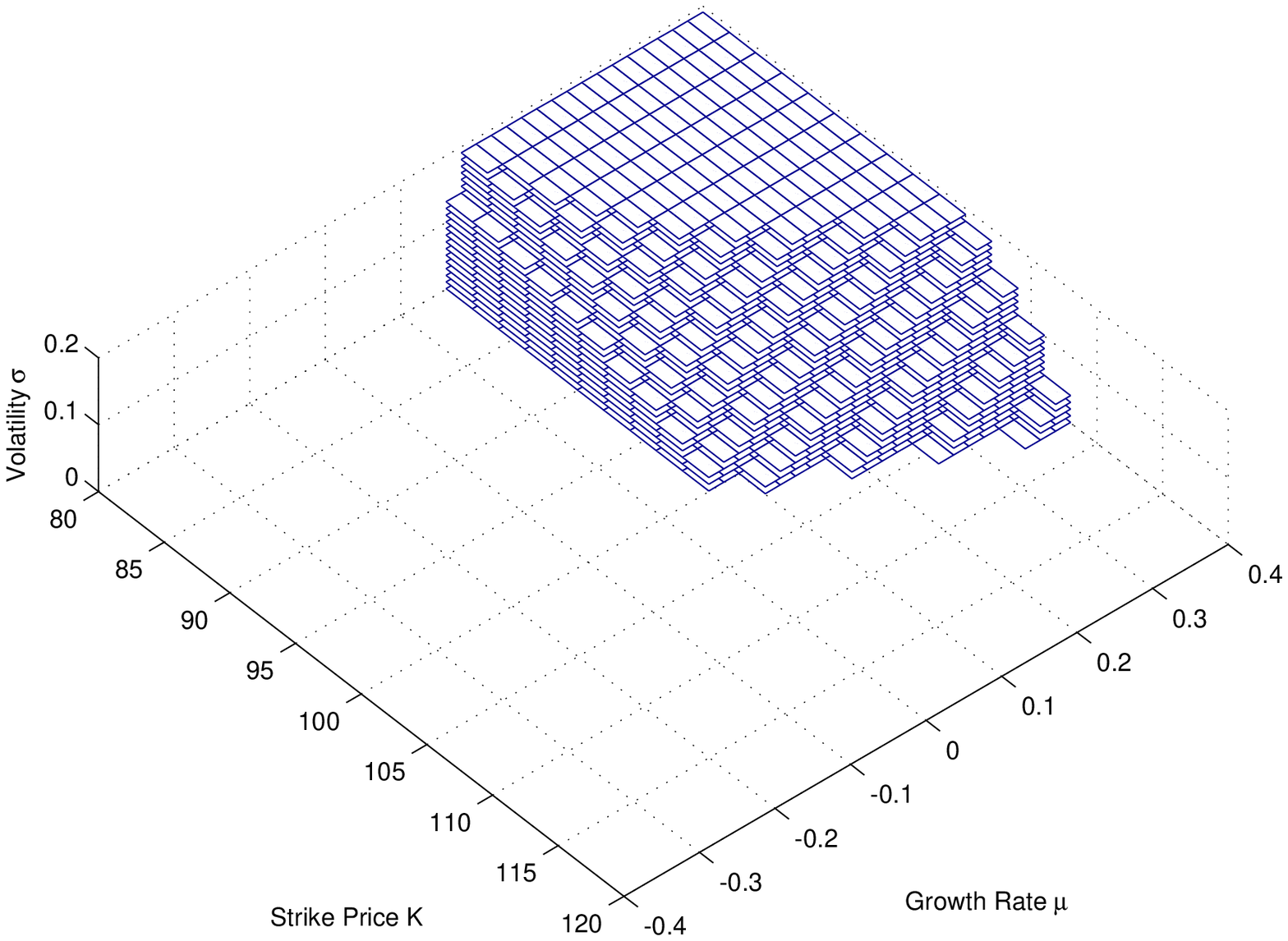}
  \includegraphics[width=9cm]{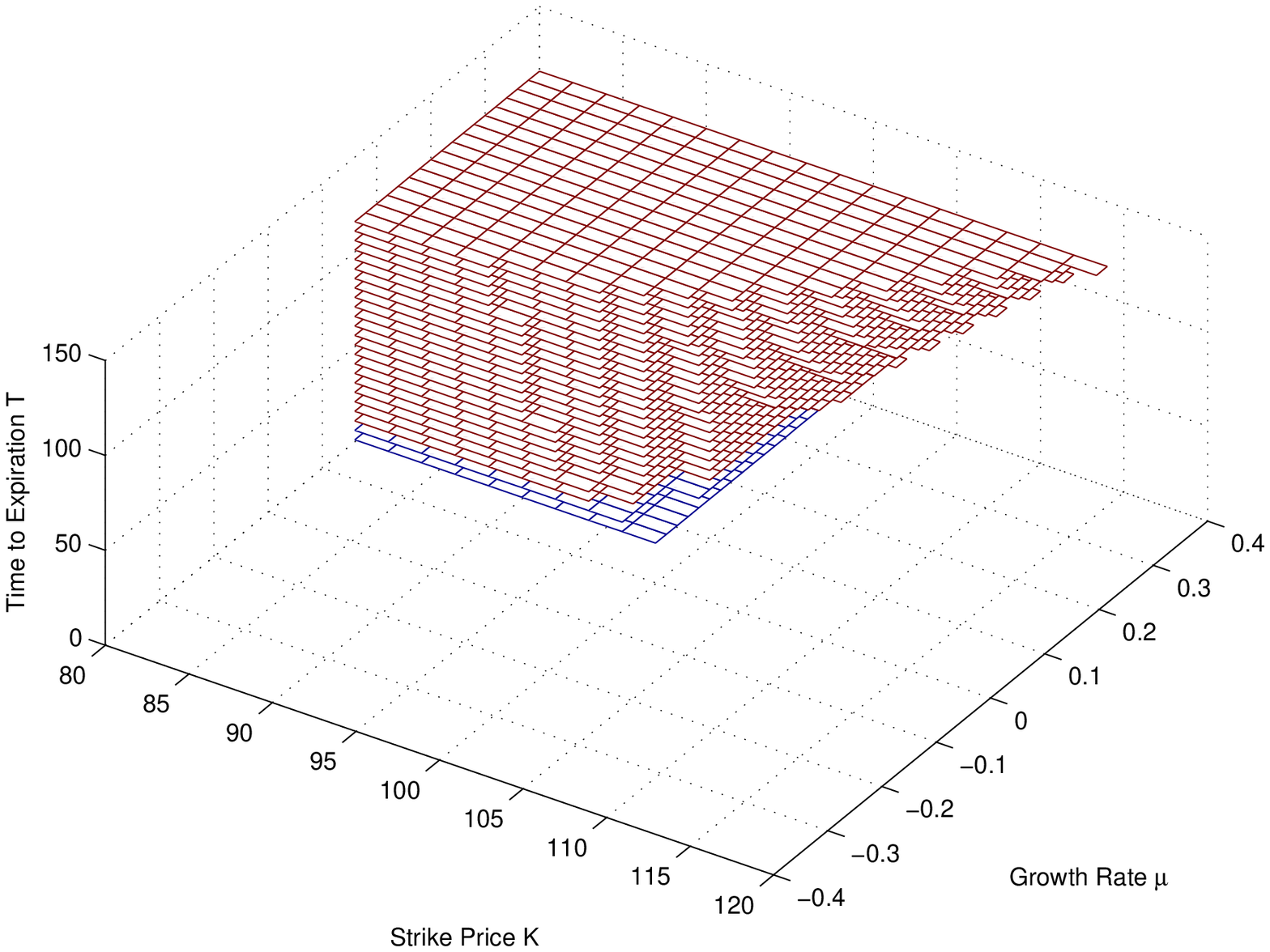}
  \caption[Compositions of $\mu$, $K$ $r$, $\sigma$ $T$ with probabilities larger than $50\%$.]
 {Compositions of $\mu$, $K$ $r$, $\sigma$, $T$ with probabilities larger than $50\%$.}\label{urstk}}
\end{figure}

Set the market factors as $S_0=100$, $K/S_0 \in [0.8,1.2]$, $\mu \in
[-0.4,0.4]$ per year. The compositions of $\left(K, \mu, r \right)$
are computed with $\sigma=0.1$ per year, $T=60$ days, $r \in
[0.001,0.3]$ per year. The compositions of $\left(K, \mu, \sigma
\right)$ are calculated with $r=0.05$ per year, $T=60$ days, $\sigma
\in [0.001,0.2]$ per year. And the compositions of $\left(K, \mu, T
\right)$ are calculated with $r=0.05$ per year, $\sigma = 0.1$ per
year, $T \in [1,120] $ days.

From figure (\ref{urstk}),  we can see that the probabilities of
positive returns must be less than $50\%$, when the growth rates of
stock prices are negative. The probabilities are on average beyond
$50\%$ for in-the-money options,  and below $50\%$ for
out-of-the-money options, when the growth rates of stock prices are
positive. The option holders will try to buy the options with
probabilities beyond $50\%$, and not to buy the options with
probabilities below $50\%$, which makes the market prices are on
average larger than the BS values for in-the-money options, and less
than the BS values for out-of-the-money options.

It is interesting to find that these theoretical predictions are
consistent with the empirical observations of
\cite{macbeth_merville_1979}, where the market prices are on average
lager than the BS values for in-the-money options, and less than the
BS values for out-of-the-money options.  Therefore the observed
anomalies of BS values may be correlated with the growth rates  of
stock prices, and the probabilities of positive returns should be
taken into account in derivative valuation.

\section{Probability of Positive Return and  Option Price}

When there are transaction costs or other frictions,
\cite{perrakis_ryan_1984,levy_1985,ritchken_1985} show that the
no-arbitrage conditions only place bounds on option prices,
therefore the prices of options should be determined through
equilibrium methods, rather than no-arbitrage models
\cite{longstaff_1995}.  The financial market without friction has
been discussed in the previous section, and the numerical
investigations indicate that the BS values should be adjusted
according to their probabilities of positive returns, therefore BS
values are not the equilibrium prices of options.

An equilibrium argument can be applied to determine the fair price
of option through the probability of positive return. If the
probability of positive return is larger than the equilibrium level,
the investors would try to buy this option until the option price
increases to eliminate the extra possibility of positive return. And
the investors would not buy the options with probabilities below the
equilibrium level, until the option price decrease enough to
compensate the risk of negative return.  On the other hand, the
requirements of probabilities to achieve positive returns may be
different among the investors according to their attitudes to risk.

The option price can be deduced from (\ref{pequation}), if the
equilibrium probability $p$ is determined previously, and we have
the following proposition.
\begin{proposition}\label{propprice}
In the Black-Scholes market, if the equilibrium probability of
positive return is $p$, and $p<N(e_1)$, the equilibrium price of a
European call option  is
\begin{equation}\label{call}
C(p)=S_0 e^{-\sigma \sqrt{ T}  N^{-1} ( \frac{p }{ N(e_1)} )
              +(\mu-r-\frac{1}{2}\sigma^2)T
} - e^{-rT}K.
\end{equation}
Otherwise, the required probability of positive return is beyond the
the probability of the option to be exercised, therefore $C(p)$ does
not exist (NaN). Where $N^{-1}(\cdot)$ is the inverse function of
the standard normal distribution function.
\end{proposition}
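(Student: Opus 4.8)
The plan is to treat the defining relation $p(C)=p$ from Proposition \ref{prop1} as a single scalar equation in the unknown $C$ and to invert it explicitly. Writing $p(C)=N(e_1)\cdot N(e_2)$ with $e_1,e_2$ as in Proposition \ref{prop1}, the first observation I would make is that $e_1$, and hence $N(e_1)$, does not involve $C$ at all; only $e_2$ carries the dependence on $C$, through the term $\log\!\left(S_0/(K+Ce^{rT})\right)$. Thus the equilibrium condition collapses to $N(e_2)=p/N(e_1)$, and the whole problem reduces to solving for $e_2$ and then for $C$.

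Before inverting, I would dispose of existence and the feasibility hypothesis $p<N(e_1)$. The key point is that $N(e_1)$ has a concrete meaning: a short computation with $S_T=S_0\exp(\sigma W_T+(\mu-\tfrac12\sigma^2)T)$ shows $N(e_1)$ equals the probability that $S_T>K$, i.e. that the option finishes in the money, which is a hard ceiling on any probability of strictly positive return, since a positive payoff requires exercise. Formally, as $C$ ranges over the admissible interval $C>-Ke^{-rT}$ (so that $K+Ce^{rT}>0$), the quantity $\log\!\left(S_0/(K+Ce^{rT})\right)$ runs over all of $\mathbb{R}$, so $N(e_2)$ runs over $(0,1)$ and $p(C)$ runs over $(0,N(e_1))$, tending to $N(e_1)$ as $C\downarrow -Ke^{-rT}$ and to $0$ as $C\to\infty$. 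Combined with the monotonicity established in the preceding proposition (that $p(C)$ is strictly decreasing in $C$), this gives a unique admissible $C$ precisely when $p\in(0,N(e_1))$; when $p\ge N(e_1)$ the ratio $p/N(e_1)\ge 1$ lies outside the domain of $N^{-1}$ and no finite price attains it, which is exactly the NaN case of the statement.

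With feasibility secured the inversion is routine: apply $N^{-1}$ to obtain $e_2=N^{-1}\!\left(p/N(e_1)\right)$, substitute the definition of $e_2$, multiply through by $\sigma\sqrt{T}$, exponentiate to clear the logarithm, and solve the resulting linear equation for $K+Ce^{rT}$, whence for $C$. Factoring out $e^{-rT}$ and merging $(\mu-\tfrac12\sigma^2)T$ with the discount exponent $-rT$ into the single term $(\mu-r-\tfrac12\sigma^2)T$ should reproduce formula (\ref{call}) exactly. I do not expect a genuine obstacle in the algebra, which is a transparent term-by-term inversion; the step requiring the most care is the feasibility discussion that both justifies the case split between the closed-form price and the NaN case and supplies the uniqueness that licenses speaking of \emph{the} equilibrium price $C(p)$.
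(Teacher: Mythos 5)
Your proposal is correct and follows essentially the same route as the paper's own proof: fix $N(e_1)$ (which is independent of $C$), solve $N(e_2)=p/N(e_1)$ for $e_2$ via $N^{-1}$, substitute the definition of $e_2$, and unwind the logarithm to isolate $C$, with $p\ge N(e_1)$ dismissed because $p/N(e_1)\ge 1$ lies outside the domain of $N^{-1}$. Your added feasibility and uniqueness discussion (the range of $p(C)$ as $C$ varies, plus monotonicity) is more careful than the paper's one-line treatment of the NaN case, but it is a refinement of the same argument, not a different one.
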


\begin{remark}
From equation (\ref{call}), we can find that the individual attitude
to risk
 is important to option
valuation. The true probability of an option to be exercised is
$N(e_1)$, and those investors, whose individual requirements for the
probabilities of positive returns are beyond $N(e_1)$, would not buy
this option. As  market prices should lie in the no-arbitrage
bounds, the option price $C(p)$ will be modified as $ \min \left\{0,
S_0 - e^{-rT}K \right\} \le \tilde{C}(p) \le S_0$.
\end{remark}

\begin{proposition}
In the Black-Scholes model, the equilibrium option price $C(p)$ is a
decreasing function with respect to $p$.
\end{proposition}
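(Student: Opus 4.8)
The plan is to work directly from the closed-form expression for $C(p)$ established in Proposition \ref{propprice}, equation (\ref{call}), and simply trace how each factor depends on $p$. The first observation I would record is that $e_1$ involves only $\mu$, $S_0$, $K$, $\sigma$, $T$ and is independent of $p$; hence $N(e_1)$ is a positive constant throughout. Likewise the additive term $-e^{-rT}K$ and the part $(\mu - r - \tfrac{1}{2}\sigma^2)T$ of the exponent do not depend on $p$. Consequently the entire $p$-dependence of $C(p)$ is funneled through the single composite quantity $N^{-1}\!\left(\frac{p}{N(e_1)}\right)$ sitting inside the exponential.

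From here I would argue by a chain of monotonicity, or equivalently by differentiation. The hypothesis $p < N(e_1)$ guarantees $\frac{p}{N(e_1)} \in (0,1)$, so $N^{-1}$ is well defined and strictly increasing there. Setting $u = \frac{p}{N(e_1)}$, a direct computation gives
\begin{equation}
C'(p) = S_0\, e^{-\sigma\sqrt{T}\,N^{-1}(u) + (\mu - r - \frac{1}{2}\sigma^2)T}\cdot\left(-\sigma\sqrt{T}\right)\cdot \frac{1}{N(e_1)}\cdot\frac{1}{N'\!\left(N^{-1}(u)\right)}.
\end{equation}
Every factor here has a clear sign: $S_0>0$, the exponential is positive, $\sigma\sqrt{T}>0$, $N(e_1)>0$, and the standard normal density $N'(\cdot)>0$. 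The lone negative sign $-\sigma\sqrt{T}$ therefore forces $C'(p)<0$ on the whole admissible range $p\in(0,N(e_1))$, which is exactly the desired strict decrease.

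I do not anticipate a genuine obstacle; the statement is essentially a monotonicity bookkeeping exercise once the formula (\ref{call}) is in hand. The only point requiring a moment of care is domain bookkeeping: one must invoke the restriction $p<N(e_1)$ to keep the argument of $N^{-1}$ strictly inside $(0,1)$ so that $N^{-1}$ is differentiable with positive derivative, and to ensure $C(p)$ is defined at all. Beyond that, the result follows either from the explicit sign of $C'(p)$ above or, if one prefers to avoid differentiation, from composing the increasing map $p\mapsto \frac{p}{N(e_1)}$ with the increasing $N^{-1}$, then with the decreasing map $x\mapsto e^{-\sigma\sqrt{T}\,x}$ (as $\sigma\sqrt{T}>0$), and finally scaling by $S_0>0$ and adding a constant.
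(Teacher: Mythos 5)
Your proof is correct and follows essentially the same route as the paper, which simply asserts that the monotonicity is read off directly from equation (\ref{call}); you supply the details (the sign computation for $C'(p)$, or equivalently the composition of monotone maps, together with the domain restriction $p<N(e_1)$) that the paper leaves implicit.
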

\begin{proof}
It is directly from equation (\ref{call}).
\end{proof}

This proposition says that the investors, whose requirements for
probabilities of positive returns are larger than $p$, would not buy
the options with prices beyond $C(p)$. On the other hand, these
investors will buy the options with prices below $C(p)$.

\begin{sidewaystable}
\tabcolsep=0.14cm \small \centering
\begin{tabular}{z{3}z{3}z{3}z{3}z{3}z{3}z{3}z{3}z{3}rrrrrrrrr}
\toprule
  $K$  &   80.00   &   82.00   &   84.00   &   86.00   &   88.00   &   90.00   &   92.00   &   94.00   &   96.00   &   98.00   &   100.00  &   102.00  &   104.00  &   106.00  &   108.00  &   110.00 & 112.00 \\
\midrule
$BS$    &   21.85   &   19.90   &   17.95   &   15.99   &   14.04   &   12.09   &   10.15   &   8.25    &   6.43    &   4.74    &   3.29    &   2.11    &   1.25    &   0.68    &   0.34    &   0.15    &   0.06    \\
\midrule
-0.25   &   21.85   &   19.90   &   17.95   &   15.99   &   14.04   &   12.08   &   10.13   &   8.18    &   6.22    &   4.27    &   $NaN$   &   $NaN$   &   $NaN$   &   $NaN$   &   $NaN$   &   $NaN$   &   $NaN$   \\
-0.23   &   21.85   &   19.90   &   17.95   &   15.99   &   14.04   &   12.08   &   10.13   &   8.18    &   6.22    &   4.27    &   $NaN$   &   $NaN$   &   $NaN$   &   $NaN$   &   $NaN$   &   $NaN$   &   $NaN$   \\
-0.21   &   21.85   &   19.90   &   17.95   &   15.99   &   14.04   &   12.08   &   10.13   &   8.18    &   6.22    &   4.27    &   $NaN$   &   $NaN$   &   $NaN$   &   $NaN$   &   $NaN$   &   $NaN$   &   $NaN$   \\
-0.19   &   21.85   &   19.90   &   17.95   &   15.99   &   14.04   &   12.08   &   10.13   &   8.18    &   6.22    &   4.27    &   $NaN$   &   $NaN$   &   $NaN$   &   $NaN$   &   $NaN$   &   $NaN$   &   $NaN$   \\
-0.17   &   21.85   &   19.90   &   17.95   &   15.99   &   14.04   &   12.08   &   10.13   &   8.18    &   6.22    &   4.27    &   $NaN$   &   $NaN$   &   $NaN$   &   $NaN$   &   $NaN$   &   $NaN$   &   $NaN$   \\
-0.15   &   21.85   &   19.90   &   17.95   &   15.99   &   14.04   &   12.08   &   10.13   &   8.18    &   6.22    &   4.27    &   2.32    &   $NaN$   &   $NaN$   &   $NaN$   &   $NaN$   &   $NaN$   &   $NaN$   \\
-0.13   &   21.85   &   19.90   &   17.95   &   15.99   &   14.04   &   12.08   &   10.13   &   8.18    &   6.22    &   4.27    &   2.32    &   $NaN$   &   $NaN$   &   $NaN$   &   $NaN$   &   $NaN$   &   $NaN$   \\
-0.11   &   21.85   &   19.90   &   17.95   &   15.99   &   14.04   &   12.08   &   10.13   &   8.18    &   6.22    &   4.27    &   2.32    &   $NaN$   &   $NaN$   &   $NaN$   &   $NaN$   &   $NaN$   &   $NaN$   \\
-0.09   &   21.85   &   19.90   &   17.95   &   15.99   &   14.04   &   12.08   &   10.13   &   8.18    &   6.22    &   4.27    &   2.32    &   $NaN$   &   $NaN$   &   $NaN$   &   $NaN$   &   $NaN$   &   $NaN$   \\
-0.07   &   21.85   &   19.90   &   17.95   &   15.99   &   14.04   &   12.08   &   10.13   &   8.18    &   6.22    &   4.27    &   2.32    &   0.36    &   $NaN$   &   $NaN$   &   $NaN$   &   $NaN$   &   $NaN$   \\
-0.05   &   22.30   &   20.34   &   18.39   &   16.43   &   14.45   &   12.43   &   10.31   &   8.18    &   6.22    &   4.27    &   2.32    &   0.36    &   $NaN$   &   $NaN$   &   $NaN$   &   $NaN$   &   $NaN$   \\
-0.03   &   22.77   &   20.81   &   18.86   &   16.90   &   14.93   &   12.92   &   10.83   &   8.58    &   6.22    &   4.27    &   2.32    &   0.36    &   $NaN$   &   $NaN$   &   $NaN$   &   $NaN$   &   $NaN$   \\
-0.01   &   23.24   &   21.29   &   19.33   &   17.38   &   15.41   &   13.41   &   11.34   &   9.13    &   6.70    &   4.27    &   2.32    &   0.36    &   $NaN$   &   $NaN$   &   $NaN$   &   $NaN$   &   $NaN$   \\
0.01    &   23.72   &   21.76   &   19.81   &   17.85   &   15.89   &   13.90   &   11.85   &   9.68    &   7.30    &   4.61    &   2.32    &   0.36    &   0.00    &   $NaN$   &   $NaN$   &   $NaN$   &   $NaN$   \\
0.03    &   24.20   &   22.24   &   20.29   &   18.33   &   16.37   &   14.39   &   12.36   &   10.22   &   7.90    &   5.29    &   2.32    &   0.36    &   0.00    &   $NaN$   &   $NaN$   &   $NaN$   &   $NaN$   \\
0.05    &   24.68   &   22.72   &   20.77   &   18.82   &   16.86   &   14.88   &   12.86   &   10.75   &   8.48    &   5.94    &   3.01    &   0.36    &   0.00    &   $NaN$   &   $NaN$   &   $NaN$   &   $NaN$   \\
0.07    &   25.16   &   23.21   &   21.25   &   19.30   &   17.34   &   15.37   &   13.36   &   11.28   &   9.05    &   6.58    &   3.76    &   0.37    &   0.00    &   $NaN$   &   $NaN$   &   $NaN$   &   $NaN$   \\
0.09    &   25.65   &   23.69   &   21.74   &   19.78   &   17.83   &   15.86   &   13.86   &   11.80   &   9.61    &   7.21    &   4.48    &   1.25    &   0.00    &   0.00    &   $NaN$   &   $NaN$   &   $NaN$   \\
0.11    &   26.13   &   24.18   &   22.23   &   20.27   &   18.32   &   16.35   &   14.37   &   12.32   &   10.17   &   7.82    &   5.18    &   2.09    &   0.00    &   0.00    &   $NaN$   &   $NaN$   &   $NaN$   \\
0.13    &   26.62   &   24.67   &   22.72   &   20.76   &   18.81   &   16.85   &   14.87   &   12.84   &   10.71   &   8.42    &   5.86    &   2.90    &   0.00    &   0.00    &   $NaN$   &   $NaN$   &   $NaN$   \\
0.15    &   27.12   &   25.16   &   23.21   &   21.25   &   19.30   &   17.34   &   15.37   &   13.35   &   11.25   &   9.01    &   6.52    &   3.67    &   0.24    &   0.00    &   $NaN$   &   $NaN$   &   $NaN$   \\
0.17    &   27.61   &   25.66   &   23.70   &   21.75   &   19.80   &   17.84   &   15.87   &   13.87   &   11.79   &   9.58    &   7.16    &   4.41    &   1.15    &   0.00    &   0.00    &   $NaN$   &   $NaN$   \\
0.19    &   28.11   &   26.15   &   24.20   &   22.25   &   20.29   &   18.34   &   16.37   &   14.38   &   12.32   &   10.15   &   7.79    &   5.13    &   2.01    &   0.00    &   0.00    &   $NaN$   &   $NaN$   \\
0.21    &   28.61   &   26.65   &   24.70   &   22.75   &   20.79   &   18.84   &   16.87   &   14.89   &   12.85   &   10.71   &   8.40    &   5.82    &   2.84    &   0.00    &   0.00    &   $NaN$   &   $NaN$   \\
0.23    &   29.11   &   27.15   &   25.20   &   23.25   &   21.29   &   19.34   &   17.38   &   15.40   &   13.38   &   11.27   &   9.01    &   6.50    &   3.63    &   0.18    &   0.00    &   $NaN$   &   $NaN$   \\
0.25    &   29.61   &   27.66   &   25.71   &   23.75   &   21.80   &   19.84   &   17.88   &   15.91   &   13.90   &   11.82   &   9.60    &   7.16    &   4.39    &   1.11    &   0.00    &   0.00    &   $NaN$   \\
\bottomrule
\end{tabular}
\caption[BS values and $C(20\%)$ values for different growth rates.]
{BS values and $C(20\%)$ values for different growth rates. The
strike prices are listed in the first row, and the second row are
the BS values corresponding to those strike prices. The growth rates
of stock price are listed in the first column, varying from $-0.25$
to $0.25$ per year. And the $C(20\%)$ values are reported according
to the growth rates and strike prices respectively. NaN means that
there is no possibility to achieve positive returns with
probabilities larger than $20\%$ under the specified
market.}\label{p0.2tab}
\end{sidewaystable}

\begin{sidewaystable}
\tabcolsep=0.17cm \small \centering
\begin{tabular}{z{3}z{3}z{3}z{3}z{3}z{3}z{3}z{3}z{3}rrrrrrrr}
\toprule
  $K$  &   80.00   &   82.00   &   84.00   &   86.00   &   88.00   &   90.00   &   92.00   &   94.00   &   96.00   &   98.00   &   100.00  &   102.00  &   104.00  &   106.00  &   108.00  &   110.00  \\
\midrule
$BS$  &   21.85   &   19.90   &   17.95   &   15.99   &   14.04   &   12.09   &   10.15   &   8.25    &   6.43    &   4.74    &   3.29    &   2.11    &   1.25    &   0.68    &   0.34    &   0.15    \\
\midrule
-0.25   &   21.85   &   19.90   &   17.95   &   15.99   &   14.04   &   12.08   &   10.13   &   8.18    &   $NaN$   &   $NaN$   &   $NaN$   &   $NaN$   &   $NaN$   &   $NaN$   &   $NaN$   &   $NaN$   \\
-0.23   &   21.85   &   19.90   &   17.95   &   15.99   &   14.04   &   12.08   &   10.13   &   8.18    &   $NaN$   &   $NaN$   &   $NaN$   &   $NaN$   &   $NaN$   &   $NaN$   &   $NaN$   &   $NaN$   \\
-0.21   &   21.85   &   19.90   &   17.95   &   15.99   &   14.04   &   12.08   &   10.13   &   8.18    &   $NaN$   &   $NaN$   &   $NaN$   &   $NaN$   &   $NaN$   &   $NaN$   &   $NaN$   &   $NaN$   \\
-0.19   &   21.85   &   19.90   &   17.95   &   15.99   &   14.04   &   12.08   &   10.13   &   8.18    &   $NaN$   &   $NaN$   &   $NaN$   &   $NaN$   &   $NaN$   &   $NaN$   &   $NaN$   &   $NaN$   \\
-0.17   &   21.85   &   19.90   &   17.95   &   15.99   &   14.04   &   12.08   &   10.13   &   8.18    &   $NaN$   &   $NaN$   &   $NaN$   &   $NaN$   &   $NaN$   &   $NaN$   &   $NaN$   &   $NaN$   \\
-0.15   &   21.85   &   19.90   &   17.95   &   15.99   &   14.04   &   12.08   &   10.13   &   8.18    &   6.22    &   $NaN$   &   $NaN$   &   $NaN$   &   $NaN$   &   $NaN$   &   $NaN$   &   $NaN$   \\
-0.13   &   21.85   &   19.90   &   17.95   &   15.99   &   14.04   &   12.08   &   10.13   &   8.18    &   6.22    &   $NaN$   &   $NaN$   &   $NaN$   &   $NaN$   &   $NaN$   &   $NaN$   &   $NaN$   \\
-0.11   &   21.85   &   19.90   &   17.95   &   15.99   &   14.04   &   12.08   &   10.13   &   8.18    &   6.22    &   $NaN$   &   $NaN$   &   $NaN$   &   $NaN$   &   $NaN$   &   $NaN$   &   $NaN$   \\
-0.09   &   21.85   &   19.90   &   17.95   &   15.99   &   14.04   &   12.08   &   10.13   &   8.18    &   6.22    &   $NaN$   &   $NaN$   &   $NaN$   &   $NaN$   &   $NaN$   &   $NaN$   &   $NaN$   \\
-0.07   &   21.85   &   19.90   &   17.95   &   15.99   &   14.04   &   12.08   &   10.13   &   8.18    &   6.22    &   4.27    &   $NaN$   &   $NaN$   &   $NaN$   &   $NaN$   &   $NaN$   &   $NaN$   \\
-0.05   &   21.85   &   19.90   &   17.95   &   15.99   &   14.04   &   12.08   &   10.13   &   8.18    &   6.22    &   4.27    &   $NaN$   &   $NaN$   &   $NaN$   &   $NaN$   &   $NaN$   &   $NaN$   \\
-0.03   &   21.85   &   19.90   &   17.95   &   15.99   &   14.04   &   12.08   &   10.13   &   8.18    &   6.22    &   4.27    &   $NaN$   &   $NaN$   &   $NaN$   &   $NaN$   &   $NaN$   &   $NaN$   \\
-0.01   &   21.85   &   19.90   &   17.95   &   15.99   &   14.04   &   12.08   &   10.13   &   8.18    &   6.22    &   4.27    &   $NaN$   &   $NaN$   &   $NaN$   &   $NaN$   &   $NaN$   &   $NaN$   \\
0.01    &   21.85   &   19.90   &   17.95   &   15.99   &   14.04   &   12.08   &   10.13   &   8.18    &   6.22    &   4.27    &   2.32    &   $NaN$   &   $NaN$   &   $NaN$   &   $NaN$   &   $NaN$   \\
0.03    &   21.85   &   19.90   &   17.95   &   15.99   &   14.04   &   12.08   &   10.13   &   8.18    &   6.22    &   4.27    &   2.32    &   $NaN$   &   $NaN$   &   $NaN$   &   $NaN$   &   $NaN$   \\
0.05    &   21.85   &   19.90   &   17.95   &   15.99   &   14.04   &   12.08   &   10.13   &   8.18    &   6.22    &   4.27    &   2.32    &   $NaN$   &   $NaN$   &   $NaN$   &   $NaN$   &   $NaN$   \\
0.07    &   21.85   &   19.90   &   17.95   &   15.99   &   14.04   &   12.08   &   10.13   &   8.18    &   6.22    &   4.27    &   2.32    &   $NaN$   &   $NaN$   &   $NaN$   &   $NaN$   &   $NaN$   \\
0.09    &   21.85   &   19.90   &   17.95   &   15.99   &   14.04   &   12.08   &   10.13   &   8.18    &   6.22    &   4.27    &   2.32    &   0.36    &   $NaN$   &   $NaN$   &   $NaN$   &   $NaN$   \\
0.11    &   21.97   &   20.02   &   18.06   &   16.11   &   14.15   &   12.18   &   10.17   &   8.18    &   6.22    &   4.27    &   2.32    &   0.36    &   $NaN$   &   $NaN$   &   $NaN$   &   $NaN$   \\
0.13    &   22.44   &   20.49   &   18.53   &   16.58   &   14.62   &   12.66   &   10.66   &   8.58    &   6.33    &   4.27    &   2.32    &   0.36    &   $NaN$   &   $NaN$   &   $NaN$   &   $NaN$   \\
0.15    &   22.91   &   20.96   &   19.01   &   17.05   &   15.10   &   13.13   &   11.14   &   9.09    &   6.89    &   4.40    &   2.32    &   0.36    &   $NaN$   &   $NaN$   &   $NaN$   &   $NaN$   \\
0.17    &   23.39   &   21.43   &   19.48   &   17.53   &   15.57   &   13.61   &   11.63   &   9.59    &   7.43    &   5.03    &   2.32    &   0.36    &   $NaN$   &   $NaN$   &   $NaN$   &   $NaN$   \\
0.19    &   23.87   &   21.91   &   19.96   &   18.00   &   16.05   &   14.09   &   12.12   &   10.10   &   7.97    &   5.64    &   2.91    &   0.36    &   0.00    &   $NaN$   &   $NaN$   &   $NaN$   \\
0.21    &   24.34   &   22.39   &   20.44   &   18.48   &   16.53   &   14.57   &   12.60   &   10.60   &   8.50    &   6.23    &   3.61    &   0.36    &   0.00    &   $NaN$   &   $NaN$   &   $NaN$   \\
0.23    &   24.83   &   22.87   &   20.92   &   18.97   &   17.01   &   15.05   &   13.09   &   11.09   &   9.03    &   6.80    &   4.29    &   1.23    &   0.00    &   $NaN$   &   $NaN$   &   $NaN$   \\
0.25    &   25.31   &   23.36   &   21.40   &   19.45   &   17.50   &   15.54   &   13.58   &   11.59   &   9.54    &   7.37    &   4.93    &   2.04    &   0.00    &   $NaN$   &   $NaN$    &   $NaN$   \\
\bottomrule
\end{tabular}
\caption[BS values and $C(50\%)$ values for different growth rates.]
{BS values and $C(50\%)$ values for different growth rates. The
strike prices are listed in the first row, and the second row are
the BS values corresponding to those strike prices. The growth rates
of stock price are listed in the first column, varying from $-0.25$
to $0.25$ per year. And the $C(50\%)$ values are reported according
to the growth rates and strike prices respectively. NaN means that
there is no possibility to achieve positive returns with
probabilities larger than $50\%$ under the specified
market.}\label{p0.5tab}
\end{sidewaystable}

\section{Comparisons of the Two Option Prices}
\subsection{Growth Rates and Biases of BS Values}
In order to investigate the performance of the proposed method, set
the market factors as $S_0=100$, $r=0.05$ per year, $\sigma=0.1$ per
year, $T=60$ days, and calculate the call option prices  for $K/S
\in [0.8,1.15]$, $\mu \in [-0.25,0.25]$ per year, using both BS
formula and the proposed method. The results are reported in table
(\ref{p0.2tab}) and table (\ref{p0.5tab}) with respect to $p=20\%$
and $p=50\%$.

From table (\ref{p0.2tab}), we can find that the BS values are
systematically larger than or equal to those $C(20\%)$ values for
all of the strike prices, when the growth rates are between $-0.25$
to $-0.07$. For those deep in-the-money options, the BS values are
gradually less than the $C(20\%)$ values, when the growth rates are
larger than $-0.05$. For those slightly in-the-money options, the BS
values are most likely larger than those $C(20\%)$ values, when the
growth rates are between $-0.05$ to $-0.01$. When the growth rates
are larger than $0.01$,  the BS values of in-the-money options are
all less than the $C(20\%)$ values. For those deep out-of-the-money
options, the BS values are systematically larger than those
$C(20\%)$ values. For those slightly out-of-the-money options, the
BS values are less than the $C(20\%)$ values, when the growth rates
are larger than $0.13$.

A similar phenomena  is observed in table (\ref{p0.5tab}) with
$p=50\%$. We can conclude that the values of proposed method are on
average larger than the BS values for in-the-money options, and less
than the BS values for out-of-the-money options in our numerical
investigations. It is interesting to find that those theoretical
observations are consistent with the empirical results of
\cite{macbeth_merville_1979}, where the market prices of options are
on average larger than the BS values for in-the-money options, and
less than the BS values for out-of-the-money options.

\subsection{Implied Volatility}
\begin{figure}
\centering{
  \includegraphics[width=10cm]{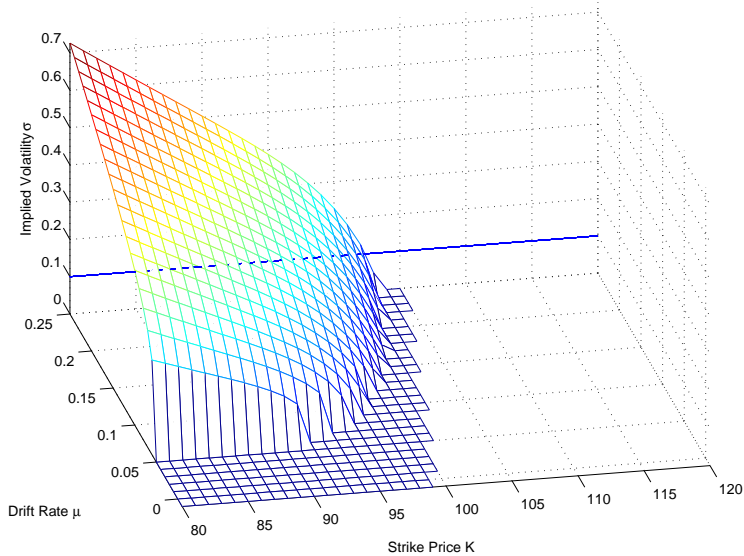}
  \caption{The surface of implied volatilities and the reference line.}\label{imv}
}
\end{figure}

As the influence of growth rate has been taken into account in the
values of $C(p)$, it is possible to investigate the implied
volatilities for  different growth rates with respect to BS formula.
Set the market factors as $S_0=100$, $T=60$ days, $\sigma=0.1$ per
year, $r=0.05$ per year, $p=0.5$,
 $K/S \in [0.8,1.2]$, $\mu \in  [-0.1,0.25]$ per year.
The surface of implied volatilities for different $\mu$ is plotted
in figure (\ref{imv}).

From figure (\ref{imv}),  we can see that there is significant
inconsistency of implied volatilities using the   $C(50\%)$ values
as  market prices. The shape of the implied volatilities is similar
to the empirical results of many authors, such as
\cite{kuwahara_marsh_2000}. Therefore the growth rate  of stock
price should be used to explain the biases of BS formula, and the
probabilities of positive returns should be taken into account in
derivative valuation.

\section{Conclusions and Discussions}
The biases of Black-Scholes formula have been discussed widely for a
long time, and many market factors have been used to explain the
observed anomalies of BS values, such as strike price, volatility,
time to expiration, bid-ask spread, trading volume, option open
interest, transaction cost, etc. Although the growth rate is an
important factor to describe the dynamics of stock prices, it is
rarely discussed in the literature for the biases of BS formula.
Most of the reason is that the physical growth rate of stock price
is eliminated on the procedure of risk neural valuation, and the
resulted option price is not correlated with the growth rate. The
influence of growth rate on the biases of BS formula is investigated
under the BS model, and it is found that the larger the growth rate,
the more possible the option to be exercised with  positive returns.
Therefore the BS values are not the equilibrium prices, and the
market prices need to be adjusted from their BS values to the
equilibrium prices, according to the probabilities of positive
returns.

An alternative valuation method for European call option is proposed
in this paper. The probability of positive return is used to
identify the risk of holding an option, and the equilibrium price of
option will balance the requirements for probabilities of positive
returns between the both parties in the transaction. The performance
of  BS formula and the proposed method are compared under the BS
model, and it is found that the values of proposed method are on
average larger than the  BS values for in-the-money options, and
less than the  BS values for out-of-the-money options. These
theoretical phenomenon are similar to the empirical results reported
by \cite{macbeth_merville_1979}, if we take the values of proposed
method as the market prices.

The values of proposed method are also used to calculate the implied
volatilities from BS formula. It is found that the shape of implied
volatilities is similar to the empirical observations, such as
\cite{kuwahara_marsh_2000}. From the biases of BS values and the
shape of implied volatilities in our numerical investigations, we
can conclude that there must be some similarity between the market
prices and the values of proposed method, therefore the proposed
valuation method may have some merit.

As the focus of this paper is to discuss the possibility to explain
the observed anomalies of BS formula by the growth rate of stock
price theoretically, the empirical performance of the proposed
valuation method is not investigated in detail. Future research
should investigate the empirical performance of the proposed method,
and pay more attention  to  the mechanism of market equilibrium,
when the probabilities of positive returns are taken into account in
the investment decisions.

\small {\section*{Acknowledgment}} The authors want to thank Prof.
Zhiyuan Huang, Prof. Chujin Li, Dr. Xu Chen for helpful discussions.
And this work is supported by the starting foundation of Chongqing
University, No. 0903005104882.

\appendix

\section{The proof of Proposition \ref{prop1}}

\begin{proof}
The probability of positive return can be rewritten as
\begin{eqnarray}\label{pstpr}
p(C)
  & = &
  Pr  \left\{  \left( S_T-K \right)^{+} - C e^{rT} \ge 0   | S_T-K \ge 0
       \right\} \cdot  Pr  \left\{ S_T-K \ge 0
       \right\}\nonumber \\
      &  & +
       Pr  \left\{  \left( S_T-K \right)^{+} - C e^{rT} \ge 0   | S_T-K < 0
     \right\} \cdot        Pr  \left\{  S_T-K < 0
     \right\}\nonumber \\
  & = & Pr  \left\{ S_T \ge K
       \right\}\cdot
  Pr  \left\{   S_T \ge K + C e^{rT}   \right\}.
\end{eqnarray}
On the other hand, we have
\begin{eqnarray}\label{e1}
 Pr  \left\{ S_T \ge K
       \right\}
& = &  Pr  \left\{ S_0 exp\left(  \sigma W_T + \left(\mu
-\frac{1}{2} \sigma^2  \right)T \right) \ge K
       \right\}\nonumber \\
 &       = & Pr  \left\{ \frac{-\sigma W_T}{\sigma \sqrt{T}}  \le
 \frac{\log \left( \frac{S_0}{K} \right)+ (\mu -\frac{1}{2} \sigma^2  )T}{\sigma \sqrt{T}}
       \right\}\nonumber \\
  &     = &
       N(e_1).
\end{eqnarray}
Following a similar procedure, we have
\begin{equation}\label{e2}
Pr  \left\{   S_T \ge K + C e^{rT}   \right\}
  =N(e_2),
\end{equation}
and the desired result will follow.
\end{proof}


\section{The proof of Proposition \ref{propprice}}
\begin{proof} When $p > N \left(e_1 \right)$ is held, the inverse function
$N^{-1}\left( \frac{p}{N \left(e_1 \right)} \right)$ dose not exist.
 When $p<N \left(e_1 \right)$ is held, from $p=N(e_1)N(e_2)$, we have
$N \left(e_2 \right) = \frac{p}{N \left(e_1 \right)}$, and it
follows that
\begin{equation}\label{prob}
e_2  =N^{-1} \left( \frac{p}{N \left(e_1 \right)} \right).
\end{equation}
On the other hand, we have
\begin{equation}\label{probb}
e_2 = \frac{\log\left(\frac{S_0}{K+C e^{rT}}\right)+ \left(\mu
-\frac{1}{2} \sigma^2
          \right)T}{\sigma \sqrt{T}}.
\end{equation}
Substituting (\ref{probb}) into (\ref{prob}), the desired result
follows.
\end{proof}



\section*{References}

\end{document}